\newtheorem{theorem}{Theorem}[section]
\newtheorem{lemma}[theorem]{Lemma}
\newtheorem{proposition}[theorem]{Proposition}
\theoremstyle{definition}
\theoremstyle{remark}
\newtheorem{remarks}[theorem]{Remarks}
\title{On the spectrum of leaky surfaces with a potential bias}
\author{Pavel Exner}
\begin{document}

\noindent\emph{To my friend Helge Holden on the occasion of his 60th birthday.}

\begin{abstract}
We discuss operators of the type $H = -\Delta + V(x) - \alpha
\delta(x-\Sigma)$ with an attractive interaction, $\alpha>0$, in
$L^2(\mathbb{R}^3)$, where $\Sigma$ is an infinite surface,
asymptotically planar and smooth outside a compact, dividing the
space into two regions, of which one is supposed to be convex, and
$V$ is a potential bias being a positive constant $V_0$ in one of
the regions and zero in the other. We find the essential spectrum
and ask about the existence of the discrete one with a particular
attention to the critical case, $V_0=\alpha^2$. We show that
$\sigma_\mathrm{disc}(H)$ is then empty if the bias is supported in
the `exterior' region, while in the opposite case isolated
eigenvalues may exist.
\end{abstract}

\begin{classification}
Primary 81Q10; Secondary 35J10
\end{classification}

\begin{keywords}
Schr\"odinger operators, $\delta$-interaction, potential bias, spectral properties
\end{keywords}

\section{Introduction}

An anniversary is usually an opportunity to look back at the
achievements of the jubilee. In Helge's case the picture is
impressive as he contributed significantly to several different
areas of mathematical physics. Nevertheless, one of his works made a
much larger impact than any other, namely the monograph \cite{AGHH}
first published in 1988. It is a collective work but Helge's hand is
unmistakably present in the exposition, and I add that it makes me
proud to be a part of the second edition of this book.

This also motivates me to choose a problem from this area as a topic
of this paper. I am going to discuss operators of the type
 % -------------- %
\begin{equation} \label{hamiltonian}
H = -\Delta + V(x) - \alpha \delta(x-\Sigma)\,, \quad \alpha>0\,,
\end{equation}
 % -------------- %
in $L^2(\mathbb{R}^3)$, where the $\delta$-potential is supported by
an infinite surface $\Sigma$ dividing the space into two regions, of
which one is supposed to be convex, and $V$ is a potential bias
being a positive constant in one of the regions and zero in the
other. The question to be addressed concerns spectral properties of
such operators, in particular, how they depend on the geometry of
$\Sigma$. We will observe similarities with recent results obtained
in the two-dimensional analogue of the present problem \cite{EV16},
especially a peculiar asymmetry. On the other hand, however, it is
not likely to have these results extended to higher dimensions, cf.
a remark at the end of Section~\ref{s:cone}.

\section{Statement of the problem and the results}
\label{s:results}

Let us begin with formulating the problem described above in proper terms.

\subsection{Assumptions}

As we shall see, the geometry of $\Sigma$ will be decisive for
spectral properties of the operator \eqref{hamiltonian}. We focus
our attention on the following class of surfaces:
 % -------------- %
  \begin{enumerate}
  \item[(a)] $\Sigma$ is topologically equivalent to a plane dividing $\mathbb{R}^3$ into two regions such that one of them is convex. The trivial case of two halfspaces is excluded.
  \item[(b)] $\Sigma$ may contain at most finite families $\mathcal{C}=\{C_j\}$ of finite $C^1$ curves, which are either closed or have regular ends, and $\mathcal{P}=\{P_j\}$ of points such that outside the set $\mathcal{C} \cup \mathcal{P}$ the surface is $C^2$ smooth admitting a parametrization with a uniformly elliptic metric tensor.
    \end{enumerate}
 % -------------- %
To distinguish the two regions we shall refer to the convex one as
`interior' and denote it $\Omega_\mathrm{int}$, the other will be
`exterior', denoted as $\Omega_\mathrm{ext}$. The curve finiteness
in assumption (b) refers to the Hausdorff distance, i.e. to the
metric inherited from the ambient three-dimensional Euclidean space.
The assumption implies, in particular, that $\Sigma$ is $C^2$ smooth
outside a compact. Note also that the curves indicating the
non-smooth parts of $\Sigma$ may in general touch or cross; the
regular ends mean that they can be prolonged locally without losing
the $C^1$ property.

Since $\Sigma$ is by assumption topologically equivalent to a plane,
we can use an atlas consisting of a single chart, in other words, a
map $\Sigma:\,\mathbb{R}^2\to\mathbb{R}^3$ provided we accept the
licence that the fundamental forms and quantities derived from them
may not exist at the points of $\mathcal{C} \cup \mathcal{P}$,
nevertheless, geodesic distances remain well defined across these
singularities. Furthermore, by assumption (b) the principal
curvatures $k_1,k_2$ of $\Sigma$ are well defined outside a compact.
We will suppose that
 % -------------- %
  \begin{enumerate}
  \item[(c)] $\Sigma$ is \emph{asymptotically planar}, that is, the principal curvature vanish as the geodesic distance from a fixed point tends to infinity.
    \end{enumerate}
 % -------------- %
Equivalently one can require that both the Gauss and mean curvatures
given by $K=k_1k_2$ and $M=\frac12(k_1+k_2)$, respectively, vanish
asymptotically. We also assume that
 % -------------- %
  \begin{enumerate}
  \item[(d)] there is a $c>0$ such that $|\Sigma(s)- \Sigma(t)| \ge c|s-t|$ holds for any $s,t\in \mathbb{R}^2$.
    \end{enumerate}
 % -------------- %
This ensures, in particular, that there are no cusps at the points
of $\mathcal{C} \cup \mathcal{P}$ where $\Sigma$ is not smooth; in
view of assumption (a) such a constant must satisfy $c<1$.

Given a bounded potential $V$, one can demonstrate in the same way
as in \cite[Sec.~4]{BEKS94} that under the stated assumptions the
quadratic form $q=q_{\alpha,\Sigma,V}$ defined by
 % -------------- %
\begin{equation} \label{Hamform}
q[\psi] := \|\nabla\psi\|^2 + (\psi, V\psi) -\alpha \int_{\mathbb{R}^2} |\psi(\Sigma(s))|^2\,g^{1/2}(s)\, \mathrm{d}s_1\, \mathrm{d}s_2\,,
\end{equation}
 % -------------- %
where $s=(s_1,s_2)$ are the coordinates used to parametrize $\Sigma$
and $g=\det(g_{ij})$ is the appropriate squared Jacobian defined by
means of the metric tensor $(g_{ij})$, with the domain
$H^1(\mathbb{R}^3)$, is closed and below bounded. Thus it is
associated with a unique self-adjoint operator which we identify
with $H= H_{\alpha,\Sigma,V}$ of \eqref{hamiltonian} above. In fact
such a claim is valid for a much wider class of potentials, however,
we focus here our attention on a particular case. By hypothesis (a)
above the surface $\Sigma$ splits $\mathbb{R}^3$ into two regions,
and we assume that
 % -------------- %
  \begin{enumerate}
  \item[(e)] $\,V(x)=V_0 >0$ in one of these regions and $V(x)= 0$ in the other.
    \end{enumerate}
 % -------------- %

\subsection{An auxiliary problem}

In the trivial case we have excluded the problem is solved easily by
separation of variables. It is useful to look at the transverse part
which we will need in the following. It is given by the operator
 % -------------- %
\begin{equation} \label{trans-op}
h = -\frac{\mathrm{d}^2}{\mathrm{d} x^2} -\alpha \delta (x) + V(x)\,,
\end{equation}
 % -------------- %
where $V(x) = V_0$ for $x > 0$ and $V(x) = 0$ otherwise, associated
with the quadratic form $\phi \mapsto \|\phi'\|^2 -\alpha|\phi(0)|^2
+ (\phi, V\phi)$ defined on $H^1(\mathbb{R})$. Properties of this
operator are easily found; we adopt without proof from \cite{EV16}
the following simple results.
 % -------------- %
\begin{lemma}
 \label{prop-h}
  \begin{enumerate}[(i)]
  \setlength{\itemsep}{2pt}
  \item $\sigma_\mathrm{ess} (h) = [0, \infty )$.
  \item The operator $h$ has no eigenvalues for $V_0 \ge \alpha^2$.
  \item The operator $h$ has a unique eigenvalue $\mu = - \left( \frac{ \alpha^2 - V_0}{2\alpha}\right)^2$ for $V_0 < \alpha^2$.
  \item If $V_0 = \alpha^2$ the equation $h\psi = 0$ has a bounded weak solution $\psi \notin L^2(\mathbb{R})$.
  \item For $V_0 > 0$ and any $\varphi \in C^2(\mathbb{R}_+) \bigcap L^2(\mathbb{R}_+)$ we have
 % -------------- %
$$
\int_0^\infty ({|\varphi^{\prime}}|^2 + V_0|\varphi|^2)(x)\, \mathrm{d} x \ge \sqrt{V_0}\, |\varphi(0)|^2.
$$
 % -------------- %
    \end{enumerate}
\end{lemma}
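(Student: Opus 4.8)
Since the transverse operator $h$ is exactly solvable, the plan is to treat all five claims by explicit computation, splitting the reasoning into the essential-spectrum part (i), the eigenvalue analysis (ii)--(iv), and the boundary estimate (v).

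For (i) I would decouple the two half-lines by imposing an additional Dirichlet condition at $x=0$; this turns $h$ into $h_-\oplus h_+$ with $h_- = -\mathrm{d}^2/\mathrm{d}x^2$ on $L^2(\mathbb{R}_-)$ and $h_+ = -\mathrm{d}^2/\mathrm{d}x^2 + V_0$ on $L^2(\mathbb{R}_+)$, each with a Dirichlet boundary condition, and the passage from $h$ to $h_-\oplus h_+$ is a finite-rank perturbation in the resolvent sense --- this is precisely the mechanism by which the singular $\delta$ interaction is controlled, cf.~\cite{AGHH} --- hence it preserves the essential spectrum. Since $\sigma_\mathrm{ess}(h_-)=[0,\infty)$ and $\sigma_\mathrm{ess}(h_+)=[V_0,\infty)$ with $V_0>0$, their union gives $\sigma_\mathrm{ess}(h)=[0,\infty)$.

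For (ii)--(iv) I would look directly for solutions of $h\psi = E\psi$. For $E\ge 0$ the equation $-\psi''=E\psi$ on $\mathbb{R}_-$ has only oscillatory solutions (affine ones at $E=0$), none of them in $L^2(\mathbb{R}_-)$, so no eigenvalue can lie in $[0,\infty)$. For $E<0$ set $\kappa=\sqrt{-E}>0$ and $\lambda=\sqrt{V_0-E}=\sqrt{V_0+\kappa^2}$; an $L^2$ eigenfunction must be a multiple of $\mathrm{e}^{\kappa x}$ on $\mathbb{R}_-$ and of $\mathrm{e}^{-\lambda x}$ on $\mathbb{R}_+$, continuity at $0$ makes the two coefficients equal, and the jump relation $\psi'(0+)-\psi'(0-) = -\alpha\psi(0)$ yields $\kappa+\lambda=\alpha$. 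This requires $\alpha-\kappa=\sqrt{V_0+\kappa^2}>0$ and, upon squaring, $\kappa = \frac{\alpha^2-V_0}{2\alpha}$. Such a $\kappa$ is positive exactly when $V_0<\alpha^2$, giving the unique eigenvalue $E=-\kappa^2=-\bigl(\frac{\alpha^2-V_0}{2\alpha}\bigr)^2$ and establishing (ii) and (iii). At the borderline $V_0=\alpha^2$ one gets $\kappa=0$, so the zero-energy equation is solved by the function equal to a nonzero constant on $\mathbb{R}_-$ and to $\mathrm{e}^{-\alpha x}$ on $\mathbb{R}_+$; the matching conditions hold because $\sqrt{V_0}=\alpha$, and this solution is bounded but not square integrable, which is (iv).

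For (v) I may assume that the left-hand side is finite, since otherwise the inequality is trivial; then $\varphi\in H^1(\mathbb{R}_+)$, so $(|\varphi|^2)' = 2\,\mathrm{Re}(\overline{\varphi'}\varphi)\in L^1(\mathbb{R}_+)$ by Cauchy--Schwarz, whence $|\varphi(x)|^2$ tends to a limit as $x\to\infty$ which, by square integrability, must be $0$. Completing the square then gives
\[
0\le \int_0^\infty \bigl|\varphi'+\sqrt{V_0}\,\varphi\bigr|^2\,\mathrm{d}x = \int_0^\infty\bigl(|\varphi'|^2+V_0|\varphi|^2\bigr)\,\mathrm{d}x + \sqrt{V_0}\int_0^\infty\bigl(|\varphi|^2\bigr)'\,\mathrm{d}x,
\]
and the last integral equals $-|\varphi(0)|^2$, which is exactly the asserted bound. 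All the computations here are elementary; the only places asking for a word of justification are the finite-rank-in-resolvent claim behind (i), standard for $\delta$ interactions, and the vanishing of the boundary term at infinity in (v) --- so I do not anticipate a genuine obstacle, the statements being of the ``simple'' kind the formulation already advertises.
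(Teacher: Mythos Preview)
Your argument is correct in all five parts; the explicit eigenvalue computation, the finite-rank resolvent perturbation for (i), and the completion of the square for (v) are all standard and carried out without error. Note, however, that the paper itself does not prove this lemma at all: it states the result as ``easily found'' and adopts it without proof from \cite{EV16}, so there is no in-paper argument to compare your approach against --- you have simply supplied what the author chose to omit.
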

 % -------------- %
Relations between the coupling constant and the potential bias will
play an important role in the following. For the sake of brevity, we
shall call the case (iv) of the lemma \emph{critical}, and similarly
we shall use the terms \emph{subcritical} for case (iii) and
\emph{supercritical} for the situation where $V_0>\alpha^2$.

\subsection{The results}

Let us look first at the essential spectrum. As usual in the
Schr\"odinger operator theory it is determined by the behavior of
the interaction at large distances. In view of assumption (c) we
expect that asymptotically the situation approaches the trivial case
with separated variables mentioned above, and indeed, we have the
following result.
 % -------------- %
\begin{theorem} \label{spess}
$\sigma_\mathrm{ess} (H) = [\mu,\infty)\,$ holds under the assumptions (a)--(e), where $\mu := -\frac14 \alpha ^{-2}(\alpha ^2 -V_0)^2$ for $V_0 < \alpha ^2$ and $\mu := 0$ otherwise.
\end{theorem}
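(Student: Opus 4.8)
The strategy is the standard two-inclusion argument for the essential spectrum of leaky-surface operators, using a Neumann bracketing and a Weyl-sequence construction, adapted to accommodate the potential bias.

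For the inclusion $\sigma_\mathrm{ess}(H) \supseteq [\mu,\infty)$, I would build a singular Weyl sequence. The idea is that asymptotically $\Sigma$ is planar (assumption (c)) and the surface is flat outside a compact region, so far away the operator looks like the separated-variables model $h \otimes I + I \otimes (-\Delta_{\mathbb{R}^2})$ acting on $L^2(\mathbb{R}) \otimes L^2(\mathbb{R}^2)$, whose spectrum is $[\mu,\infty)$: indeed by Lemma \ref{prop-h}(i)--(iv), $\sigma(h) = \{\mu\} \cup [0,\infty)$ in the subcritical case and $\sigma(h)=[0,\infty)$ otherwise, and adding the free Laplacian's $[0,\infty)$ fills out the half-line $[\mu,\infty)$ in all cases (in the critical and supercritical cases one uses that $0$ is the bottom of $\sigma_\mathrm{ess}(h)$ and, in the critical case, the generalized eigenfunction $\psi$ from Lemma \ref{prop-h}(iv)). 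Concretely, for a fixed $\lambda \ge \mu$, write $\lambda = \mu + k^2$ (or $\lambda = 0 + k^2$) and take trial functions of the product form: the transverse factor $\psi$ (the eigenfunction or, in the critical/supercritical case, a suitably cut-off generalized eigenfunction of $h$) transported along the normal to $\Sigma$, times an oscillating factor $\mathrm{e}^{i k \cdot s}$ on the surface coordinates, localized by a plateau cutoff supported in an annular region $R_n < |s| < 2R_n$ with $R_n \to \infty$. One then checks $\|(H-\lambda)\varphi_n\| / \|\varphi_n\| \to 0$: the error terms come from (1) derivatives falling on the cutoff, which are $O(R_n^{-1})$ relative to the norm, (2) the curvature of $\Sigma$ and the deviation of the tubular coordinates from Cartesian ones, which vanish as $R_n \to \infty$ by assumption (c) together with the non-cusp condition (d), and (3) the non-localization of the critical generalized eigenfunction, handled by a slowly-growing cutoff in the transverse variable. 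Since the supports can be taken mutually disjoint (or pushed to infinity), orthogonality is automatic and we get $\lambda \in \sigma_\mathrm{ess}(H)$.

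For the opposite inclusion $\sigma_\mathrm{ess}(H) \subseteq [\mu,\infty)$, I would use a bracketing argument to show $\inf \sigma_\mathrm{ess}(H) \ge \mu$, which suffices since $\sigma_\mathrm{ess}(H)$ is closed. Impose an additional Neumann boundary condition on a large sphere $S_r$ of radius $r$; this decouples $H$ into an interior part on the ball $B_r$, which has compact resolvent and thus purely discrete spectrum, and an exterior part $H_r^\mathrm{ext}$ on $\mathbb{R}^3 \setminus B_r$. By Neumann bracketing, $\inf \sigma_\mathrm{ess}(H) \ge \inf \sigma(H_r^\mathrm{ext})$, and it remains to show the right-hand side tends to $\mu$ (or at least $\ge \mu - o(1)$) as $r \to \infty$. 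On the exterior region, for $r$ large enough $\Sigma$ is a $C^2$ graph over a plane with small curvature, and one can introduce locally tubular coordinates $(u,s)$ with $u$ the signed distance to $\Sigma$ and $s$ the surface coordinate; the Jacobian and metric coefficients differ from the flat ones by terms controlled by $\|K\|_\infty$ and $\|M\|_\infty$ on $|s|>r$, which go to zero by assumption (c). In these coordinates the form $q[\psi]$ is bounded below, modulo an $o(1)$ error, by the form of $h \otimes I + I \otimes (-\Delta)$ (with Neumann conditions), whose bottom is exactly $\mu$; here Lemma \ref{prop-h}(v) is the key estimate needed to control the $\delta$-term against the gradient and potential terms when $V_0 > 0$ on the relevant side, ensuring the transverse form stays $\ge \mu$ (and $\ge 0$ in the critical/supercritical case). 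Letting $r \to \infty$ gives $\inf \sigma_\mathrm{ess}(H) \ge \mu$.

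The main obstacle is the careful handling of the tubular-coordinate change in the exterior region and the bookkeeping of the resulting error terms: one must verify that the diffeomorphism to normal coordinates is well-defined and non-degenerate on the whole exterior (using assumption (d) to rule out self-intersections of the normal fibers and cusps at $\mathcal{C} \cup \mathcal{P}$), and that the metric perturbation is uniformly $o(1)$ as $r\to\infty$, not merely pointwise small. Once the geometry is under control, the spectral conclusion follows from the one-dimensional facts in Lemma \ref{prop-h} by a routine separation-of-variables lower bound on the quadratic form.
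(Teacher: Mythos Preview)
Your two-step strategy (Weyl sequence for $[\mu,\infty)\subset\sigma_\mathrm{ess}$, Neumann bracketing for the reverse inclusion) matches the paper's, but the implementations diverge. For the Weyl sequence the paper does not use tubular coordinates or oscillating factors: it picks far-away surface points $a_j$, works in \emph{flat} Cartesian coordinates based at the tangent plane to $\Sigma$ at $a_j$, and takes $\psi_j=f(x^{(j)})g(y^{(j)})$ with compactly supported $f,g$ satisfying $\|(-\Delta-\zeta)f\|<\varepsilon/4$ and $\|(h-\mu)g\|<\varepsilon/4$. The error is then just the discrepancy between $\Sigma$ and its tangent plane on $\mathrm{supp}\,\psi_j$, which vanishes by assumption~(c). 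This handles all three regimes uniformly, since $\mu\in\sigma(h)$ always guarantees such a $g$ exists --- no generalized eigenfunctions or transverse cutoffs are needed.

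For the lower bound the difference is more substantive. The paper does \emph{not} cut by a sphere; it brackets into three pieces: a tube $\Omega_{d,R}$ of half-width $d$ around the far part $\Sigma_R$ of the surface, the remaining (compact) piece $\Omega_{d,R}^c$ of the tube, and the tube complement $\Omega_d^c$. Only $\Omega_{d,R}$ matters, and there tubular coordinates are globally defined. Your sphere cut leaves an exterior that is not a tube, so you would need a further bracketing step to isolate a region where normal coordinates make sense --- essentially recovering the paper's decomposition. More importantly, once Neumann walls sit at $u=\pm d$, the transverse problem is the operator $h_N$ on the \emph{finite} interval $(-d,d)$, and the paper proves a dedicated estimate (its Lemma~3.1) showing $\inf\sigma(h_N)\ge\mu-c_0 d^{-1}$, then lets both $R$ and $d$ grow. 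Your appeal to Lemma~\ref{prop-h}(v) is not the right ingredient here: that half-line inequality only yields a nonnegative bound when $\alpha\le\sqrt{V_0}$ and gives no control toward $\mu$ in the subcritical case $V_0<\alpha^2$. Replacing it with a finite-interval Neumann estimate of the Lemma~3.1 type is what actually closes the argument.
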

 % -------------- %

The question about the existence of the discrete spectrum is more
involved and the potential bias makes the answer distinctively
asymmetric. In particular, a critical or supercritical potential
supported in the \emph{exterior} region prevents negative
eigenvalues from existence.
 % -------------- %
\begin{theorem} \label{ext-abs}
Under the stated assumptions, suppose that $V(x)=0$ holds in
$\Omega_\mathrm{int}$ and $V(x)=V_0\ge \alpha^2$ in
$\Omega_\mathrm{ext}$, then $\sigma(H) = \sigma_\mathrm{ess}(H) =
[0,\infty)\,$.
\end{theorem}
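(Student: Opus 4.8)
The plan is to show that the quadratic form $q$ is nonnegative under the stated hypotheses; combined with Theorem~\ref{spess}, which gives $\sigma_\mathrm{ess}(H)=[0,\infty)$ when $V_0\ge\alpha^2$, this forces $\sigma(H)=[0,\infty)$ and hence an empty discrete spectrum. So the whole task reduces to proving $q[\psi]\ge 0$ for every $\psi\in H^1(\mathbb{R}^3)$ when $V$ is supported in the exterior, convex complement region carries no potential, and $V_0\ge\alpha^2$.

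First I would localize the surface term. Writing $\psi_\mathrm{ext}$ for the restriction of $\psi$ to $\Omega_\mathrm{ext}$, the trace of $\psi$ on $\Sigma$ coincides with the boundary trace of $\psi_\mathrm{ext}$ on $\partial\Omega_\mathrm{ext}=\Sigma$. The key geometric input is that $\Omega_\mathrm{int}$ is convex, so $\Omega_\mathrm{ext}$ is the exterior of a convex body and $\Sigma$ is (weakly) concave as seen from $\Omega_\mathrm{ext}$; in particular one can foliate $\Omega_\mathrm{ext}$ by the normal exponential map from $\Sigma$, i.e. every point of $\Omega_\mathrm{ext}$ is written uniquely as $\Sigma(s)+u\,n(s)$ with $u>0$ and $n$ the outward normal, with Jacobian $J(s,u)=(1+uk_1(s))(1+uk_2(s))g^{1/2}(s)\ge g^{1/2}(s)$ by convexity (both principal curvatures $k_j\ge 0$ with the relevant sign convention). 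This is exactly the geometric mechanism already used for the two-dimensional analogue in \cite{EV16}. Using these Fermi coordinates, I would bound from below
\[
\|\nabla\psi_\mathrm{ext}\|^2 + (\psi_\mathrm{ext},V\psi_\mathrm{ext})
\;\ge\; \int_{\mathbb{R}^2}\!\!\int_0^\infty \Bigl(|\partial_u\varphi_s(u)|^2 + V_0|\varphi_s(u)|^2\Bigr)\,J(s,u)\,\mathrm{d}u\,\mathrm{d}s
\;\ge\; \int_{\mathbb{R}^2}\!\!\int_0^\infty \Bigl(|\partial_u\varphi_s(u)|^2 + V_0|\varphi_s(u)|^2\Bigr)\,\mathrm{d}u\; g^{1/2}(s)\,\mathrm{d}s,
\]
where $\varphi_s(u):=\psi(\Sigma(s)+u\,n(s))$, discarding the nonnegative tangential-gradient contribution and using $J\ge g^{1/2}$ together with $V_0\ge 0$. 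Now for each fixed $s$ I apply Lemma~\ref{prop-h}(v) to the one-dimensional function $\varphi_s\in H^1(\mathbb{R}_+)$ (a density/approximation argument handles the $C^2$ requirement), obtaining
\[
\int_0^\infty\Bigl(|\varphi_s'(u)|^2 + V_0|\varphi_s(u)|^2\Bigr)\,\mathrm{d}u \;\ge\; \sqrt{V_0}\,|\varphi_s(0)|^2 \;=\; \sqrt{V_0}\,|\psi(\Sigma(s))|^2.
\]
Integrating over $s$ with the weight $g^{1/2}(s)$ gives $\|\nabla\psi_\mathrm{ext}\|^2+(\psi,V\psi)\ge \sqrt{V_0}\int_{\mathbb{R}^2}|\psi(\Sigma(s))|^2 g^{1/2}(s)\,\mathrm{d}s$, and since the gradient over $\Omega_\mathrm{int}$ only adds a nonnegative term while $\sqrt{V_0}\ge\alpha$, we conclude
\[
q[\psi]\;\ge\; (\sqrt{V_0}-\alpha)\int_{\mathbb{R}^2}|\psi(\Sigma(s))|^2\,g^{1/2}(s)\,\mathrm{d}s\;\ge\;0.
\]

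The main obstacle I expect is the rigorous justification of the Fermi-coordinate change of variables and the bound $J\ge g^{1/2}$ across the singular set $\mathcal{C}\cup\mathcal{P}$: the normal exponential map is a diffeomorphism onto $\Omega_\mathrm{ext}$ only away from those curves and points, and one must check that this set has measure zero in $\Omega_\mathrm{ext}$, that the foliation still covers $\Omega_\mathrm{ext}$ up to a null set, and that no folding occurs (which is where assumptions (a)--(d), particularly the no-cusp condition (d) and the convexity in (a), are essential). The rest---Lemma~\ref{prop-h}(v), the density argument for the $C^2$ requirement, and assembling the inequalities---is routine. Once $q\ge 0$ is established, $H\ge 0$, so there are no negative eigenvalues, and together with Theorem~\ref{spess} this yields $\sigma(H)=\sigma_\mathrm{ess}(H)=[0,\infty)$ as claimed.
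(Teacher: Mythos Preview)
Your proposal is correct and follows essentially the same route as the paper: drop the nonnegative interior contribution, pass to normal (Fermi) coordinates in $\Omega_\mathrm{ext}$, discard the tangential gradient, use the convexity bound $(1+uk_1)(1+uk_2)\ge 1$ on the Jacobian, and apply Lemma~\ref{prop-h}(v) fiberwise together with $\sqrt{V_0}\ge\alpha$. Regarding the obstacle you flag, the paper's resolution is simpler than what you anticipate: rather than arguing that the exceptional set has measure zero, one partitions $\Sigma$ into finitely many $C^2$ pieces $\Sigma_l$, builds the normal tubes $\Omega_l$ over each, and simply \emph{drops} whatever part of $\Omega_\mathrm{ext}$ is not covered by $\bigcup_l\Omega_l$ (this uncovered region can have positive measure, but its contribution to the form is nonnegative and may be discarded in the lower bound).
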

 % -------------- %

On the other hand, the operator \eqref{hamiltonian} may have
isolated eigenvalues in the (sub)critical regime as we are going to
illustrate on examples. In Section~\ref{s:cone} we discuss the case
of a conical surface and show that the discrete spectrum of $H$ is
nonempty provided $V_0$ is small enough. The most interesting,
though, is the \emph{critical} case, $V_0=\alpha^2$ with the bias in
the \emph{interior} region. In Section~\ref{s:rooftop} we discuss
another example, this time with $\Sigma$ being a `rooftop' surface,
and show that for suitable values of parameters we have here
$\sigma_\mathrm{disc}(H)\ne\emptyset$.

\section{Proof of Theorem~2.2}

The argument splits into two parts. First we shall demonstrate the
implication
 % -------------- %
\begin{equation} \label{spess1}
\nu \in \sigma_\mathrm{ess}(H) \quad\text{if}\;\;\nu \ge \mu\,.
\end{equation}
 % -------------- %
To this goal, one has to find for any fixed $\nu \ge \mu$ and any
$\varepsilon > 0$ an infinite-dimensional subspace $\mathcal{L}
\subset \mathrm{Dom}(H)$ such that $\|(H-\nu) \psi \| < \varepsilon$
holds for any $\psi \in \mathcal{L}$. We denote $\zeta :=\nu-\mu \ge
0 $ and choose a pair of functions of unit $L^2$ norm, $f \in
C_0^2(\mathbb{R}^2)$ with the property that $\big\| (-\Delta -
\zeta)f\big\| < \frac14 \varepsilon$, and $g \in C_0^2(\mathbb{R})$
such that $\|g\|=1$ and $\| (h - \mu)g\| < \frac14 \varepsilon$.
Such functions can always be found in view of the fact that the
essential spectrum of the two-dimensional Laplacian is $[0,\infty)$
and of the properties of the operator $h$ stated in
Lemma~\ref{prop-h}.

In the next step we choose a sequence of surface points
$a_j=\Sigma(s^{(j)})$ such that $|s^{(j)}|\to\infty$ as
$j\to\infty$. With each of them we associate the Cartesian system of
coordinates $(x^{(j)},y^{(j)})$ where $x^{(j)}$ are the Cartesian
coordinates in the tangential plane to $\Sigma$ at the point $a_j$
and $y^{(j)}$ is the distance from this tangential plane. By
assumption (b), the points $a_j$ can be always chosen in such a way
that this coordinate choice makes sense. This allows us to define
the functions $\psi_j(x^{(j)},y^{(j)}) = f(x^{(j)})g(y^{(j)})$. By
construction, each of them has a compact support of the diameter
independent of $j$, hence in view of assumption (d) one can pick
them so that $\mathrm{supp}\,\psi_j \cap \Sigma$ is simply
connected. Using then a straightforward telescopic estimate in
combination with the requirement $\|f\|=\|g\|=1$ we get the
inequality
 % -------------- %
\begin{equation} \label{HVZeq1}
\|(H-\nu)\psi_j\| \leq \Big\|(-\Delta x -\zeta)f\Big\|  +\|(h-\mu)g\| + V_0 \|\psi_j|_{\mathcal{A}_j}\| +\|(\delta_{\Sigma_j} -\delta_\Sigma) \psi_j\|\,,
\end{equation}
 % -------------- %
where $\Sigma_j$ is the tangential plane at $a_j$ and
$\mathcal{A}_j$ is the part of the function support squeezed between
$\Sigma$ and $\Sigma_j$;  the last term is understood as the
$L^2$-norm over the two surface segments contained in the border of
$\mathcal{A}_j$. The first two terms on the right-hand side of
\eqref{HVZeq1} are by construction bound by $\frac12 \varepsilon$.
Furthermore, in view of the assumptions (a)--(c) in combination with
the smoothness of the functions $f,g$, which are the same for all
the $\psi_j$, the other two terms tend to zero as $j\to \infty$,
hence $\|(H-\nu)\psi_j\| < \varepsilon$ holds for all $j$ large
enough. In addition, one can always choose the points $a_j$ in such
a way that $\mathrm{supp}\,\psi_j \cap \mathrm{supp}\,\psi_{j'} =
\emptyset$ holds for $j \neq j'$, which means that Weyl's criterion
hypothesis is satisfied.

To complete the proof of the theorem, we have to demonstrate the
opposite implication, in other words, to check the validity of the
relation
 % -------------- %
\begin{equation} \label{spess2}
\sigma_\mathrm{ess}(H) \cap (-\infty, \mu) = \emptyset\,,
\end{equation}
 % -------------- %
which is equivalent to $\inf \sigma_\mathrm{ess}(H) \ge \mu$. While
in the first part of the proof we have extended to the present case
the argument used in the two-dimensional situation, now we choose a
different approach because the localization estimates employed in
\cite{EV16} become more involved here. We will need an auxiliary
result which is a sort of modification of Proposition~2.5 in
\cite{EY02}.

 % -------------- %
\begin{lemma}
 \label{trans-est}
Let $h_N$ denote the operator \eqref{trans-op} acting on the
interval $(-d,d)$ with Neumann boundary conditions at the endpoints,
associated with the form $\phi \mapsto \|\phi'\|^2
-\alpha|\phi(0)|^2  + (\phi, V\phi)$ defined on $H^1(-d,d)$. If
$V_0\le\alpha^2$, there are positive $c_0, d_0$ such that for all
$d>d_0$ we have $\inf\sigma(h_N) \ge \mu - c_0 d^{-1}$. If, on the
other hand, $V_0>\alpha^2$ holds, then $h_N\ge 0$ for all $d$ large
enough.
\end{lemma}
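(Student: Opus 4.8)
The plan is to establish both bounds by a ground-state (super)solution substitution adapted to the Neumann endpoints and to the point interaction, in the spirit of Proposition~2.5 of \cite{EY02}. Write $q_N$ for the quadratic form of $h_N$ and fix a trial energy $E$. Suppose we can produce a function $\varphi$ that is continuous and strictly positive on $[-d,d]$, twice continuously differentiable on each of $(-d,0)$ and $(0,d)$ where it solves $-\varphi''+V\varphi=E\varphi$, and satisfies the matching conditions $\varphi(0^-)=\varphi(0^+)$ and $\varphi'(0^+)-\varphi'(0^-)=-\alpha\varphi(0)$ at the origin. Substituting $\psi=\varphi u$ for $\psi$ in the form domain $H^1(-d,d)$ — note that $u=\psi/\varphi$ lies again in $H^1(-d,d)$, since $\varphi$ is Lipschitz and bounded below by a positive constant on the compact interval — and integrating by parts on the two subintervals, where the interior potential term and the $\delta$ contribution cancel against the terms produced by $\varphi$, one obtains
\[
q_N[\psi]-E\|\psi\|^2 = \int_{-d}^{d}\varphi^2\,|u'|^2\,\mathrm{d}x \;+\; \varphi(d)\varphi'(d)\,|u(d)|^2 \;-\; \varphi(-d)\varphi'(-d)\,|u(-d)|^2\,,
\]
which it suffices to check for $\psi$ smooth on each subinterval and then extend by density. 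Hence $h_N\ge E$ provided the two endpoint terms are non-negative, i.e.\ provided $\varphi'(d)\ge0$ and $\varphi'(-d)\le0$; in particular it is enough that $\varphi$ obey Neumann conditions at both endpoints.

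For $V_0\le\alpha^2$ I would look for a genuine Neumann ground state of $h_N$, taking $E=\mu-\varepsilon$ with a small $\varepsilon>0$. Because $V$ is piecewise constant, $-\varphi''+V\varphi=E\varphi$ together with $\varphi'(\pm d)=0$ forces $\varphi(x)=C\cosh(\tilde\kappa(x+d))$ on $[-d,0]$ and $\varphi(x)=G\cosh(\tilde\beta(x-d))$ on $[0,d]$, where $\tilde\kappa:=\sqrt{\kappa^2+\varepsilon}$, $\tilde\beta:=\sqrt{V_0+\kappa^2+\varepsilon}$ and $\kappa:=\tfrac{\alpha^2-V_0}{2\alpha}\ge0$; both branches are automatically positive. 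Continuity and the jump condition at the origin then collapse, after dividing by $\varphi(0)$, to the single transcendental equation $\tilde\beta\tanh(\tilde\beta d)+\tilde\kappa\tanh(\tilde\kappa d)=\alpha$. Its left-hand side is continuous and strictly increasing in $\varepsilon$, tends to $+\infty$ as $\varepsilon\to\infty$, and at $\varepsilon=0$ equals $\kappa\tanh(\kappa d)+\beta\tanh(\beta d)$ with $\beta:=\alpha-\kappa$, a quantity strictly below $\kappa+\beta=\alpha$; hence there is a unique root $\varepsilon=\varepsilon(d)>0$. Since the deficit $\alpha-\kappa\tanh(\kappa d)-\beta\tanh(\beta d)$ is exponentially small in $d$ while the $\varepsilon$-derivative of the left-hand side is bounded below by a positive constant on the relevant range, one gets $\varepsilon(d)\to0$ fast, in particular $\varepsilon(d)\le c_0 d^{-1}$ for all $d$ larger than some $d_0$; the displayed identity then yields $h_N\ge\mu-\varepsilon(d)\ge\mu-c_0 d^{-1}$. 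The critical case $V_0=\alpha^2$ is included with no essential change: there $\kappa=0$ and $\mu=0$, and the root $\varepsilon(d)$ is again $o(d^{-1})$.

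For $V_0>\alpha^2$ I would run the same scheme with $E=0$, the one novelty being that an exact Neumann condition at $-d$ can no longer be imposed and must be replaced by the inequality $\varphi'(-d)\le0$. Take $\varphi$ affine on $[-d,0]$, $\varphi(x)=a+bx$ with $a=\varphi(0)>0$, and $\varphi(x)=G\cosh(\sqrt{V_0}\,(x-d))$ on $[0,d]$, so that $\varphi'(d)=0$. Continuity and the jump condition at the origin fix $G=a/\cosh(\sqrt{V_0}\,d)$ and the slope $b=a\big(\alpha-\sqrt{V_0}\tanh(\sqrt{V_0}\,d)\big)$. Since $\sqrt{V_0}>\alpha$ and $\tanh(\sqrt{V_0}\,d)\to1$, this slope is negative once $d$ exceeds the threshold $d^{\ast}$ determined by $\tanh(\sqrt{V_0}\,d^{\ast})=\alpha/\sqrt{V_0}$; for such $d$ the function $\varphi$ is strictly positive on $[-d,0]$, while $\varphi'(d)=0$ and $\varphi'(-d)=b\le0$, so the displayed identity gives $q_N[\psi]\ge0$, i.e.\ $h_N\ge0$ for every $d>d^{\ast}$.

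All the computations involved are routine. The points that need attention are the sign bookkeeping of the endpoint terms in the substitution identity, the decay estimate for the transcendental root $\varepsilon(d)$ when $V_0\le\alpha^2$ — this is where the factor $d^{-1}$ in the statement enters, with considerable room to spare — and, the only genuinely non-obvious step, the observation that in the supercritical regime one must give up an exact Neumann eigenfunction in favour of a supersolution with $\varphi'(-d)\le0$, which is precisely why a lower bound on $d$ is needed there as well. In fact the construction gives the stronger statement $\inf\sigma(h_N)=\mu-\varepsilon(d)$ with $\varepsilon(d)$ exponentially small in $d$ whenever $V_0\le\alpha^2$.
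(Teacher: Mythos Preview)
Your argument is correct. For the (sub)critical case $V_0\le\alpha^2$ it is essentially the paper's proof in different clothing: both construct the exact Neumann ground state from piecewise $\cosh$ branches, arrive at the same transcendental condition $\tilde\kappa\tanh(\tilde\kappa d)+\tilde\beta\tanh(\tilde\beta d)=\alpha$, and extract an $O(d^{-1})$ bound on the gap to $\mu$; the paper does the last step via the crude inequality $\tanh x>1-x^{-1}$ rather than your derivative argument, but the content is identical, and both remark that the true error is exponentially small.

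The supercritical case is handled differently. The paper simply observes that for $d$ large the left-hand side of the spectral condition already exceeds $\alpha$ at the putative threshold $\mu_d=0$, so \emph{no} negative eigenvalue can solve it, and since $h_N$ has purely discrete spectrum its infimum is therefore non-negative. You instead build an explicit supersolution (affine on the bias-free side, $\cosh$ on the biased side) and use the ground-state substitution identity with the relaxed endpoint condition $\varphi'(-d)\le 0$. Both routes are short and valid; the paper's is marginally quicker because the transcendental equation is already on the table, while yours has the advantage of being self-contained and of making transparent the threshold $\tanh(\sqrt{V_0}\,d^\ast)=\alpha/\sqrt{V_0}$ beyond which positivity holds.
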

 % -------------- %
\begin{proof}
We observe that the ground-state eigenfunction of $h_N$
corresponding the eigenvalue $\mu_d<0$ is of the form
 % -------------- %
$$
\psi(x) = c_1 \chi_{(0,d)} \cosh \kappa_1(x-d) + c_2 \chi_{(-d,0)} \cosh \kappa_1(x+d)\,,
$$
 % -------------- %
where $\kappa_1:= \sqrt{-\mu_d}$ and $\kappa_2:= \sqrt{V_0-\mu_d}$.
Since the function has to be continuous at $x=0$ and satisfy
$\psi'(0+) - \psi'(0-)= -\alpha\psi(0)$, we get the spectral
condition
 % -------------- %
\begin{equation} \label{trans-neg}
\kappa_1 \tanh\kappa_1 d + \kappa_2 \tanh\kappa_2 d = \alpha\,.
\end{equation}
 % -------------- %
As a function of $-\mu_d$, the left-hand side is increasing from
$\sqrt{V_0} \tanh \sqrt{V_0}d$, behaving asymptotically as
$2\sqrt{-\mu} + \mathcal{O}((-\mu)^{-1/2})$, the equation
\eqref{trans-neg} has a unique solution for any fixed $d>0$ provided
$V_0\le\alpha^2$. Since the left-hand side is monotonous also with
respect to $d$ we have $\mu_d<\mu_\infty$ where $\mu_\infty=\mu$ of
Lemma~\ref{prop-h}(iii).

To get a lower bound we have to estimate the left-hand side of
\eqref{trans-neg} from below. We will do that using the rough bound
$\tanh x > 1-2\mathrm{e}^{-2x} > 1-x^{-1}$. Writing the solution of
the appropriate estimating condition in the form $\tilde\mu_d =
\mu-\delta$ we find after a short computation that $\tilde\mu_d =
\mu -2d^{-1} + \mathcal{O}(d^{-2})$ holds as $d\to\infty$ which
together with the inequality $\mu_d>\tilde\mu_d$ yields the result.

If $V_0>\alpha^2$ no solution exists for a sufficiently small $d$.
The condition \eqref{trans-neg} can be then modified replacing one
or both hyperbolic tangents by the trigonometric one, however, we
will not need it; it is enough to note that the lowest eigenvalue of
$h_N$ --- which certainly exists as $h_N$ as a Sturm-Liouville
operator on a finite interval has a purely discrete spectrum --- is
positive for $d$ large enough.
\end{proof}

Consider now the neighborhood $\Omega_d:= \{ x\in\mathbb{R}^3:\:
\mathrm{dist}(x,\Sigma)<d\}$ of the surface. Furthermore, fix a
point $x_0\in \Sigma$ and divide $\Sigma$ into two parts, $\Sigma_R$
consisting of the points the geodesic distance from $x_0$ is larger
than $R$ and $\Sigma_R^c= \Sigma \setminus \Sigma_R$. Note that by
assumptions (a) and (b) outside a compact $\Sigma$ has a well
defined normal and the points of $\Omega_d$ can be written as
$x_\Sigma + n_{x_\Sigma}u$ with $|u|<d$, where $x_\Sigma$ is the
point satisfying $\mathrm{dist}(x,\Sigma) =
\mathrm{dist}(x,x_\Sigma)$; for $d$ small enough this part of
$\Omega_d$ does not intersect itself, i.e. the point $x_\Sigma$ is
unique. Consequently, for $R$ sufficiently large and $d$
sufficiently small we may define $\Omega_{d,R}:= \{ x_\Sigma +
n_{x_\Sigma}u:\: x_\Sigma \in \Sigma_R,\: |u|<d\}$ and
$\Omega_{d,R}^c:= \Omega_d \setminus \Omega_{d,R}$. We also
introduce the $\Omega_d^c:= \mathbb{R}^3 \setminus \Omega_d$
consisting of two connected components, so together we have
$\mathbb{R}^3 = \Omega_{d,R} \cap \Omega_{d,R}^c \cap \Omega_d^c$.

Using these notions we employ a bracketing argument. Changing the
domain of $H$ by additional Neumann conditions imposed at the
boundaries of the three domains we obtain a lower bound to our
operator,
 % -------------- %
$$
H \ge H_{\Omega_{d,R}} \oplus H_{\Omega_{d,R}^c} \oplus H_{\Omega_d^c}\,.
$$
 % -------------- %
For the proof of \eqref{spess2} only the first part on right-hand
side is relevant, because $H_{\Omega_{d,R}^c}$ corresponds to a
compact region and its essential spectrum is thus void, and $\inf
\sigma_\mathrm{ess} (H_{\Omega_d^c})=0$ holds obviously. To analyze
the first part, which we for brevity denote as $H_{d,R}$ we employ a
geometric argument similar to that used in \cite{EK03}, the
difference being the constant potential $V_0$ to one side of
$\Sigma_R$. The `pierced layer' $\Omega_{d,R}$ can be regarded as a
submanifold in $\mathbb{R}^{3}$ equipped with the metric tensor
 % -------------- %
\begin{equation}\label{metricG}
G_{ij}=
\left( \begin{array}{cc} (G_{\mu\nu}) & 0 \\
0 & 1 \end{array} \right)\,, \quad G_{\mu\nu}=(\delta
_{\mu}^{\sigma}-uh_{\mu}^{\;\sigma}) (\delta _{\sigma}^{\rho}
-uh_{\sigma}^{\:\rho})g_{\rho\nu}\,,
\end{equation}
 % -------------- %
referring to the curvilinear coordinates $(s_1,s_2,u)$, where
$g_{\rho\nu}$ is the metric tensor of $\Sigma_R$ and
$h_{\mu}^{\;\sigma}$ is the corresponding Weingarten tensor; we
conventionally use the Greek notation for the range $(1,2)$ of the
indices and the Latin for $(1,2,3)$ and we employ the Einstein
summation convention. In particular, the volume element of $\Omega
_{d}$ is given by $d\Omega := G^{1/2}\mathrm{d}^{2}s\,\mathrm{d}u$
with
%--------------%
\begin{equation}\label{Jacobi}
G:=\det(G_{ij} =g\left[ (1-uk_1)(1-uk_2)\right]^2=g(1-2Mu+Ku^2)^2\,;
\end{equation}
%--------------%
for brevity we use the shorthand $\xi(s,u)\equiv 1-2M(s)u+K(s)u^2$. Next we introduce
 % -------------- %
$$
\varrho_R:= (\{\max_{\Sigma_R}\left\| k_{1}\right\|_{\infty }, \left\|
k_{2}\right\|_{\infty}\})^{-1}\,.
$$
 % -------------- %
By assumption (c) we have $\varrho_R\to\infty$ as $R\to\infty$, and
as self-intersections of $\Omega_{d,R}$ are avoided as long
$d<\varrho_R$; we see that the layer halfwidth $d$ can be in fact
chosen arbitrarily big provided $R$ is sufficiently large. At the
same time, the transverse component of the Jacobian satifies the
inequalities $C_-(d,R)\leq \xi(s,u) \leq C_+(d,R)$, where
$C_{\pm}(d,R):=(1\pm d\varrho_R^{-1})^2$, hence for a fixed $d$ and
$R$ large the Jacobian is essentially given by the surface part;
recall that the metric tensor $g_{\mu\nu}$ is assumed to be unifomly
elliptic, $c-\delta_{\mu\nu}\leq g_{\mu\nu}\leq c_+\delta_{\mu\nu}$
with positive $c_\pm$.

Now we use these geometric notions to asses the spectral threshold
of $H_{d,R}$. Passing to the curvilinear coordinates $(s,u)$, we
write the corresponding quadratic form as
 % -------------- %
$$
(\partial_i\psi,G^{ij}\partial_j\psi )_G +(\psi, V\psi)_G -\alpha
\int_{|s|>R}\left| \psi(s,0)\right|^2\, \mathrm{d}\Sigma
$$
 % -------------- %
defined on $H_1(\Omega_{d,R}^\mathrm{flat}, \mathrm{d}\Omega )$,
where $(\cdot,\cdot)_G$ means the scalar product in $
L^2(\Omega_{d,R}^\mathrm{flat}, \mathrm{d}\Omega)$, the symbol
$\mathrm{d}\Sigma$ stands for $g^{1/2}(s)\,\mathrm{d}s$ and
$\Omega_{d,R}^\mathrm{flat} :=\{q:\: |s|>R,\,|u|<d \}$. Using the
diagonal form \eqref{metricG} of the metric tensor together with
\eqref{Jacobi} and the above mentioned bound to the factor
$\xi(s,u)$ we find
 % -------------- %
\begin{eqnarray*}
\lefteqn{(\psi, H_{d,R}\psi)_G \ge \int_{\Omega_{d,R}^\mathrm{flat}} \left( |\partial_u \psi|^2 +V|\psi|^2 \right) \mathrm{d}\Omega - \alpha
\int_{|s|>R}\left| \psi(s,0)\right|^2\, \mathrm{d}\Sigma} \\ &&
\ge \xi_R^- \int_{\Omega_{d,R}^\mathrm{flat}} \left( |\partial_u \psi|^2 +V|\psi|^2 \right) \mathrm{d}\Sigma\, \mathrm{d}u - \alpha
\int_{|s|>R}\left| \psi(s,0)\right|^2\, \mathrm{d}\Sigma\,,
\end{eqnarray*}
 % -------------- %
where $\xi_R^- := \inf_{\Omega_{d,R}} \xi(s,u)$. Introducing
similarly $\xi_R^+ := \sup_{\Omega_{d,R}} \xi(s,u)$ and using
Lemma~\ref{trans-est} with the coupling constant $\alpha_R =
\frac{\alpha}{\xi_R^-}$, we get
 % -------------- %
$$
(\psi, H_{d,R}\psi)_G \ge \frac{\xi_R^-}{\xi_R^+}\left(\mu - c_0 d^{-1} \right) \|\psi\|_G^2
$$
 % -------------- %
provided $d$ is large enough. It is clear from the above discussion
that to any $\varepsilon>0$ one can choose $R$ and $d$ sufficiently
large to get $c_0 d^{-1} < \frac12\varepsilon$, and at the same
time, $\varrho_R>d$ and $\frac{\xi_R^-}{\xi_R^+} <
\frac{\varepsilon}{2\mu}$. Consequently, $\inf
\sigma_\mathrm{ess}(H) > \mu - \varepsilon$ which concludes the
proof.

 % -------------- %
\begin{remarks}
(a) The proof did not employ the part of assumption (a) speaking about convexity of one of the regions to which the surface divides the space and, in fact, neither the fact that $\Sigma$ is simply connected. \\
(b) The bound in Lemma~\ref{trans-est} is a rough one but it suffices for the present purpose; in reality the error caused by the Neumann boundary is exponentially small as $d\to\infty$, similarly as in \cite{EY02}. Note that we use such an estimate in a different way than in the said work: there we made the error small by choosing a large $\alpha$ while here the coupling constant is fixed but we choose a large $d$ which we are allowed to do being far enough in the asymptotic region.
\end{remarks}
 % -------------- %

%\edef\refextabs{\getrefnumber{ext-abs}}
%\section{Proof of Theorem~\ref{ext-abs}}

\section{Proof of Theorem~2.3}

In view of Theorem~\ref{spess} it is sufficient to check that
$q[\psi]\ge 0$ holds for any $\psi \in C_0^2(\mathbb{R}^3)$. The
contribution from $\Omega_\mathrm{int}$ to the quadratic form is
non-negative and may be neglected; this yields the estimate $q[\psi]
\ge q_\mathrm{ext}[\psi]$, where
%--------------%
\begin{equation}\label{Jacobi}
q_\mathrm{ext}[\psi]:= \int_{\Omega_\mathrm{ext}} \left(|\nabla \psi |^2 + V_0 |\psi|^2 \right)(x)\, \mathrm{d}x -\alpha \int_{\mathbb{R}^2} |\psi(\Sigma(s))|^2\,g^{1/2}(s)\, \mathrm{d}^2s\,.
\end{equation}
%--------------%
To estimate the quantity from below, we note that by assumption (b)
there is a family of open connected subsets
$\Sigma_l,\:l=1,\dots,N$, of $\Sigma$ which are mutually disjoint
and such that $\Sigma = \overline{\cup_l \Sigma_l}$ and
$\Sigma|_{\Sigma_l}$ is $C^2$ smooth for any $l$. It may happen that
$N=1$ if $\Sigma \setminus (\mathcal{C} \cup \mathcal{P})$ is
connected, on the other hand, $N\ge 2$ has to hold, for instance,
when one of the curves of the family $\mathcal{C}$ is closed, or
more generally, if $\mathcal{C}$ contains a loop. The number of the
$\Sigma_l$'s can be made larger if we divide a smooth part of the
surface by an additional boundary, but by assumption (b) the
partition can be always chosen to have a finite number of elements,
and moreover, only one of the $\Sigma_l$'s is not precompact in
$\Sigma$.  Next we associate the sets $\Omega_l:= \{ x_\Sigma +
n_{x_\Sigma}u:\: x_\Sigma \in \Sigma_l,\: u<0\} \subset
\Omega_\mathrm{ext}$ with the $\Sigma_l$'s, where the negative sign
refers to the fact that the normal vector points conventionally into
the interior domain. Since the latter is convex by assumption~(a),
no two halflines $\{x_\Sigma \in \Sigma_l,\: u<0\}$ emerging from
different points of $\Sigma$ can intersect, and consequently, the
sets $\Omega_l$ are mutually disjoint, if $\mathcal{C} \cup
\mathcal{P} \ne \emptyset$ the closure $\overline{\cup_l \Omega_l}$
may be a proper subset of the exterior domain. This yields
 % -------------- %
$$
q_\mathrm{ext}[\psi] \ge \sum_l \int_{\Omega_l} \left(|\nabla \psi |^2 + V_0 |\psi|^2 \right)(x)\, \mathrm{d}x -\alpha \int_{\mathbb{R}^2} |\psi(\Sigma(s))|^2\,g^{1/2}(s)\, \mathrm{d}^2s
$$
 % -------------- %
and passing in the first integral to the curvilinear coordinates in
analogy with the previous proof we get
 % -------------- %
\begin{eqnarray*}
\lefteqn{q_\mathrm{ext}[\psi] \ge \sum_l \int_{M_l} \int_{-\infty}^0 \big(((\overline{\partial_i\psi})\, G^{ij} (\partial_j\psi)+ V_0 |\psi|^2)\, G^{1/2}\big) (s,u)\: \mathrm{d}^2s\, \mathrm{d}u} \\ && \qquad -\alpha \int_{\mathbb{R}^2} |\psi(\Sigma(s))|^2\,g^{1/2}(s)\, \mathrm{d}^2s\,, \phantom{AAAAAAAAAAAAAAAAA}
\end{eqnarray*}
 % -------------- %
where $M_l$ is the pull-back of the surface component $\Sigma_l$ by
the map $\Sigma$. Neglecting the non-negative term
$(\overline{\partial_\mu\psi})\, G^{\mu\nu} (\partial_\nu\psi)$ and
using (\ref{Jacobi}) we arrive at the estimate
 % -------------- %
\begin{eqnarray*}
\lefteqn{q_\mathrm{ext}[\psi] \ge \sum_l \int_{M_l} \int_{-\infty}^0
(|\partial_u\psi|^2+ V_0 |\psi|^2)(s,u)\, g^{1/2}(s)\, (1-uk_1(s))}
\\ && \qquad \times(1-uk_2(s)) \mathrm{d}^2s\, \mathrm{d}u -\alpha
\int_{\mathbb{R}^2} |\psi(\Sigma(s))|^2\,g^{1/2}(s)\, \mathrm{d}^2s
\,. \phantom{AAAA}
\end{eqnarray*}
 % -------------- %
However, $1-uk_\mu(s)\ge 1$ holds for $\mu=1,2$ and $u<0$ because
both the principal curvatures are non-negative in view of the
convexity assumption. Furthermore, the difference between $\cup_l
M_l$ and $\mathbb{R}^2$ is a zero measure set, hence we finally find
 % -------------- %
$$
q[\psi] \ge \int_{\mathbb{R}^2} \left\{\int_{-\infty}^0 (|\partial_u\psi|^2+ V_0 |\psi|^2)(s,u)\, \mathrm{d}u - \alpha |\psi(s,0)|^2    \right\}\, g^{1/2}(s)\, \mathrm{d}^2s \,,
$$
 % -------------- %
where, with the abuse of notation, we have employed the symbol
$\psi(s,0)$ for $\psi(\Sigma(s))$, but $\alpha\le\sqrt{V_0}$ holds
by assumption, and consequently, the expression in the curly
brackets is positive by Lemma~\ref{prop-h}(v). This concludes the
proof.

\section{Example of a conical surface} \label{s:cone}

Consider now the the situation where $\Sigma=\mathcal{C}_\theta$ is
a circular conical surface of an opening angle $2\theta \in
(0,\pi)$, in other words
 % -------------- %
$$
\mathcal{C}_\theta = \big\{ (x,y,z)\in\mathbb{R}^3:\: z= \cot\theta\, \sqrt{x^2+y^2} \big\}\,.
$$
 % -------------- %

This surface satisfies the assumptions (a)--(e) with $\mathcal{P}$
consisting of a single point, the tip of the cone, hence if the
potential bias is supported in the exterior of this cone, the
spectrum of the corresponding operator is by Theorems~\ref{spess}
and \ref{ext-abs} purely essential, $\sigma(H)=[\mu,\infty)$. Let us
look now what happens in the opposite case when the bias is in the
interior.

In view of the symmetry it is useful to employ the cylindrical
coordinates relative to the axis of $\mathcal{C}_\theta$. Since our
potential is independent of the azimuthal angle, the operator $H$
now commutes with the corresponding component of the angular
momentum operator, $-i\partial_\varphi$, and allows thus for a
partial-wave decomposition, $H= \bigoplus_{m\in\mathbb{Z}} H^{(m)}$.
Writing the wave function in the standard way through its reduced
components,
 % -------------- %
$$
\psi(r,\varphi,z) = \sum_{m\in\mathbb{Z}} \frac{\omega_m(r,z)}{\sqrt{2\pi r}}\, \mathrm{e}^{im\varphi}\,,
$$
 % -------------- %
we can rewrite the quadratic form (\ref{Hamform}) as the sum $q[\psi] = \sum_{m\in\mathbb{Z}} q_m[\psi]$, where
 % -------------- %
\begin{eqnarray} \label{redform}
\lefteqn{q_m[\psi] := \|\nabla\omega_m\|^2_{L^2(\mathbb{R}^2_+)} +
\int_{\mathbb{R}^2_+} \frac{4m^2-1}{4r^2}\, |\omega_m(r,z)|^2\, \mathrm{d}r\,\mathrm{d}z } \\ && \quad +
\int_{\mathbb{R}^2_+} V(r,z)\, |\omega_m(r,z)|^2\, \mathrm{d}r\,\mathrm{d}z - \alpha^2 \|\omega_m|_{\Gamma_\theta}\|_{L^2(\mathbb{R}^2_+)}\,, \nonumber
\end{eqnarray}
 % -------------- %
where $\mathbb{R}^2_+$ is the halfplane $\{(r,z):\: r>0,\,
z\in\mathbb{R}\}$ and $\Gamma_\theta$ is the halfline
$z=r\,\cot\theta$. We note first that it is sufficient to focus on
the component with vanishing angular momentum in the partial-wave
decomposition.

 % -------------- %
\begin{proposition} \label{nonzero_m}
Let $V(x)=V_0\,\chi_{\Omega_\mathrm{int}}(x)$ with $V_0>0$, then $q_m[\psi]\ge\mu$ holds for any nonzero $m\in\mathbb{Z}$ and all $\psi\in H^1(\mathbb{R}^3)$.
\end{proposition}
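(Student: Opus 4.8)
The plan is to exploit the explicit form of the reduced quadratic form \eqref{redform} and show that the negative $\delta$-term is dominated by the other contributions whenever $m\neq 0$. The key observation is that for $m\neq 0$ we have $4m^2-1\ge 3>0$, so the centrifugal term is a genuine positive potential $\frac{4m^2-1}{4r^2}$ on $\mathbb{R}^2_+$, and one expects this, together with the gradient term, to control the interaction supported on the halfline $\Gamma_\theta$. The cleanest route is to estimate $q_m[\psi]$ from below by a transverse one-dimensional problem: for fixed position along $\Gamma_\theta$, parametrize the halfplane by the signed distance $t$ from $\Gamma_\theta$, neglect the longitudinal part of $\|\nabla\omega_m\|^2$, keep the transverse derivative $|\partial_t\omega_m|^2$, and bound the centrifugal plus bias potential from below by a constant $W\ge 0$ along each transverse line. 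Then the trace inequality
$$
\int_{\mathbb{R}} \bigl(|\phi'(t)|^2 + W|\phi(t)|^2\bigr)\,\mathrm{d}t \ge 2\sqrt{W}\,|\phi(0)|^2
$$
(the standard sharp constant, analogous to Lemma~\ref{prop-h}(v) but on the whole line) reduces the matter to checking that $2\sqrt{W}\ge \alpha^2$ pointwise, which after integrating back over $\Gamma_\theta$ would give $q_m[\psi]\ge 0\ge\mu$.

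The subtlety is that the centrifugal term alone does not have a uniform positive lower bound: $\frac{4m^2-1}{4r^2}\to 0$ as $r\to\infty$, so $W$ is not bounded away from zero and the naive pointwise inequality $2\sqrt{W}\ge\alpha^2$ fails for large $r$. Here the potential bias must be brought in. On the conical surface the normal pointing into the interior is well defined away from the tip, and moving a short distance $t<0$ along it from a point of $\Gamma_\theta$ lands in $\Omega_\mathrm{int}$, where $V=V_0$; moving to $t>0$ lands in the exterior. So on each transverse line the effective potential is at least $\frac{4m^2-1}{4r^2}$ for $t>0$ and at least $V_0$ for $t<0$ near the surface (up to the geometric Jacobian factor, which on a cone is bounded and tends to $1$ transversally, exactly as in the proof of Theorem~\ref{spess}). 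The right one-dimensional model is therefore not the symmetric one but precisely the operator $h$ of \eqref{trans-op}, or rather a lower bound of the type analysed in Lemma~\ref{prop-h}: for $V_0=\alpha^2$ one is in the critical case and $h\ge 0$ by Lemma~\ref{prop-h}(ii), while the extra nonnegative centrifugal term on the $t>0$ side only helps. For general $V_0>0$ one compares with a shifted version and picks up exactly the bottom $\mu$ of the essential spectrum. I would carry this out by: (i) writing $q_m[\psi]$ in curvilinear coordinates $(\ell,t)$ adapted to $\Gamma_\theta$; (ii) discarding the longitudinal kinetic term and the centrifugal term on the interior side, and discarding the Jacobian-induced corrections after checking they have the correct sign or are controllable as in Section~3; (iii) invoking Lemma~\ref{prop-h}(v) on the exterior half-line and the analogous one-dimensional estimate on the interior half-line; (iv) reassembling to get $q_m[\psi]\ge\mu\|\omega_m\|^2$.

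The main obstacle I anticipate is step (ii): controlling the passage to curvilinear coordinates near the cone and making sure the discarded terms (the longitudinal derivative, the off-diagonal metric contributions, and the deviation of the transverse Jacobian from $1$) genuinely do not destroy the sign. On a circular cone this is favourable because the only non-smooth point is the tip, the normal bundle is globally defined on $\Gamma_\theta\setminus\{0\}$, and the single principal curvature (the cone is developable, $K=0$) is $\cot\theta/r$ in magnitude, decaying at infinity; still, one has to check that the neighbourhood of the tip — where the coordinates degenerate — contributes a form that is itself nonnegative, which should follow because the bias $V_0$ is present there on the interior side and dominates any bounded defect. A secondary point is that the factor $\alpha^2$ (rather than $\alpha$) appearing in \eqref{redform} is an artifact of the $r^{-1/2}$ rescaling in the reduced components; one must be careful that the trace term carries the same weight $g^{1/2}$ as in Lemma~\ref{prop-h}(v) so that the constants match, but this is bookkeeping rather than a real difficulty.
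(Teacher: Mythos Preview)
Your target---comparing the transverse direction with the one-dimensional operator $h$ and reading off the threshold $\mu$---coincides with the paper's, but the route you sketch does not reach it, and several of the obstacles you anticipate are not actually present. After the partial-wave reduction you are working in the flat halfplane $\mathbb{R}^2_+=\{r>0\}$ with the interaction supported on the \emph{straight} halfline $\Gamma_\theta$; the adapted coordinates $(s,t)$ are therefore just a Euclidean rotation of $(r,z)$. There are no Jacobian factors, no metric corrections, and nothing ``controllable as in Section~3'' to check. Your worries about curvilinear effects and the tip are imported from the three-dimensional layer analysis and play no role here.

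The substantive gap is your step~(iii). Splitting each transverse fibre into two half-lines and applying Lemma~\ref{prop-h}(v) separately yields at best $\sqrt{V_0}\,|\phi(0)|^2$ from the biased side and nothing from the side where $V=0$ (the centrifugal term, as you note yourself, decays and cannot supply a uniform trace constant). That controls the $\delta$-term only when $\sqrt{V_0}\ge\alpha$, and gives no access to $\mu$ in the subcritical regime; two half-line estimates do not recombine into the full-line bound $h\ge\mu$. The missing device, and the paper's actual argument, is this: drop the centrifugal term entirely (nonnegative for $m\ne 0$), rotate to $(s,t)$, and then use that for $m\ne 0$ the reduced function $\omega_m$ belongs to $H^1(\mathbb{R}^2_+)$ and can be extended by zero to an element of $H^1(\mathbb{R}^2)$. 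Minimizing the resulting form over the larger space $H^1(\mathbb{R}^2)$ one obtains the separable operator $-\partial_s^2\otimes I + I\otimes h$ on $L^2(\mathbb{R}^2)$, whose bottom is exactly $\mu$ by Lemma~\ref{prop-h}. The hypothesis $m\ne 0$ enters precisely to legitimize the zero extension, and this one step replaces all of your boundary and tip analysis.
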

 % -------------- %
\begin{proof}
If $m\ne 0$, the second term on the right-hand side of (\ref{redform}) is non-negative and one estimate the form from below by neglecting it. Following the paper \cite{BEL14} where the case $V_0=0$ is treated, we introduce in the halfplane $\mathbb{R}^2_+$ another orthogonal coordinates, $s$ measured along $\Gamma_\theta$ and $t$ in the perpendicular direction; the axes of the $(t,s)$ system are rotated with respect to those of $(r,z)$ around the point $r=z=0$ by the angle $\theta$. In these coordinates we have
 % -------------- %
$$
q_m[\psi] \ge \|\nabla\omega_m\|^2_{L^2(\mathbb{R}^2_+)} +
V_0 \int_0^\infty \mathrm{d}s \int_{-s\tan\theta}^\infty |\omega_m(s,t)|^2\, \mathrm{d}t - \alpha \int_0^\infty |\omega_m(s,0)|^2\, \mathrm{d}s\,,
$$
 % -------------- %
where with the abuse of notation we write $\omega_m(s,t)$ for the
function value in the rotated coordinates. If $\psi\in
H^1(\mathbb{R}^3)$ and $m\ne 0$, the reduced wave function
$\omega_m$ belongs to $H^1(\mathbb{R}^2_+)$. By $\tilde\omega_m$ we
denote its extension to the whole plane by the zero value in the
other halfplane, $\{(s,t):\: s\in\mathbb{R},\: t<-s\tan\theta\}$,
which naturally belongs to $H^1(\mathbb{R}^2)$. Such function form a
subspace in $H^1(\mathbb{R}^2)$, however, hence we have
 % -------------- %
\begin{eqnarray*}
\lefteqn{\inf_{\omega_m\in H^1(\mathbb{R}^2_+)} q_m[\psi] = \inf_{\omega_m\in H^1(\mathbb{R}^2_+)} \bigg\{ \|\nabla\tilde\omega_m\|^2_{L^2(\mathbb{R}^2_+)} +
V_0 \int_{\mathbb{R}^2} |\tilde\omega_m(s,t)|^2\, \mathrm{d}s\, \mathrm{d}t} \\ && \hspace{12em} - \alpha \int_{\mathbb{R}} |\tilde\omega_m(s,0)|^2\, \mathrm{d}s \bigg\} \\ && \ge \inf_{\varrho\in H^1(\mathbb{R}^2)} \bigg\{ \|\nabla\varrho\|^2_{L^2(\mathbb{R}^2_+)} +
V_0 \int_{\mathbb{R}^2} |\varrho(s,t)|^2\, \mathrm{d}s\, \mathrm{d}t - \alpha \int_{\mathbb{R}} |\varrho(s,0)|^2\, \mathrm{d}s \bigg\}\,.
\end{eqnarray*}
 % -------------- %
Noting finally that form on the right-hand side of the last
inequality is associated with the self-adjoint operator in
$L^2(\mathbb{R}^2)$ which has separated variables,
$\overline{-\partial^2_s \otimes I_t + I_s \otimes h}$, where $h$ is
the operator (\ref{trans-op}) with the variable $x$ replaced by
$-t$, we obtain the desired claim from Lemma~\ref{prop-h}.
\end{proof}

On the other hand, the component with zero momentum can give rise to
a nontrivial discrete spectrum, at least as long the potential bias
is weak enough.

 % -------------- %
\begin{proposition} \label{weak bias}
To any integer $N$ there is a number $v_N \in (0,\alpha^2)$ such
that $\#\sigma_\mathrm{disc}(H)\ge N$ holds for $0\le V_0 < v_N$.
\end{proposition}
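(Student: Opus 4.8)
The plan is to exhibit, for any $N$, a suitable value $v_N$ and a trial space of dimension at least $N$ on which the zero-momentum form $q_0$ is negative, provided $V_0<v_N$; by the min-max principle together with Theorem~\ref{spess} and Proposition~\ref{nonzero_m} this forces at least $N$ eigenvalues below the essential spectrum $\mu$. The natural place to look is the known result for $V_0=0$: by \cite{BEL14} the conical surface already produces infinitely many eigenvalues in the free case, in fact an accumulation of eigenvalues at the bottom of the essential spectrum, which then is $\inf\sigma_\mathrm{ess}=-\tfrac14\alpha^2$. I would fix $N$ and pick $N$ linearly independent functions $\omega^{(1)},\dots,\omega^{(N)}\in C_0^\infty(\mathbb{R}^2_+)$ that span a subspace $\mathcal{L}_N$ on which the $V_0=0$ form $q_0^{(0)}[\psi]:=\|\nabla\omega\|^2+\int\frac{-1}{4r^2}|\omega|^2-\alpha\|\omega|_{\Gamma_\theta}\|^2$ takes values strictly below $\mu-2\eta$ for some $\eta>0$; such a compactly supported subspace exists because the discrete spectrum of $H^{(0)}|_{V_0=0}$ is infinite and eigenfunctions can be approximated in form norm by $C_0^\infty$ functions.

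The second step is a perturbative continuity argument in $V_0$. For $\psi$ built from $\mathcal{L}_N$ one has
$$
q_0[\psi]=q_0^{(0)}[\psi]+V_0\int_{\Omega_\mathrm{int}}|\psi|^2\,\mathrm{d}x
\le q_0^{(0)}[\psi]+V_0\,\|\psi\|^2_{L^2(\mathbb{R}^2_+)},
$$
so on the (finite-dimensional, hence closed) unit sphere of $\mathcal{L}_N$ we get $q_0[\psi]\le \mu-2\eta+V_0$. Choosing $v_N:=\min\{\alpha^2,\eta\}$ we obtain $q_0[\psi]\le \mu-\eta<\mu$ uniformly on that sphere whenever $0\le V_0<v_N$. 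Since $\mathcal{L}_N$ has dimension $N$ and lies in $\mathrm{Dom}(q_0)\subset\mathrm{Dom}(q)$ (extend $\psi$ by the factor $\mathrm{e}^{i\cdot 0\cdot\varphi}/\sqrt{2\pi r}$ to a genuine element of $H^1(\mathbb{R}^3)$ supported in the $m=0$ channel, where by Proposition~\ref{nonzero_m} the other channels contribute at least $\mu$ and hence do not spoil the bound), the min-max principle gives that $H$ has at least $N$ eigenvalues below $\mu=\inf\sigma_\mathrm{ess}(H)$, i.e. $\#\sigma_\mathrm{disc}(H)\ge N$.

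I expect the main obstacle to be the first step: justifying cleanly that for $V_0=0$ the conical operator has at least $N$ eigenvalues strictly below its essential threshold, \emph{with a quantitative gap} $\eta$ one can then absorb, and that the corresponding spectral subspace can be replaced by a $C_0^\infty$ subspace at the cost of an arbitrarily small form error. The existence of infinitely many eigenvalues for $\mathcal{C}_\theta$ in the free case is exactly the content of \cite{BEL14}, so what remains is the routine but slightly delicate density/approximation argument and the bookkeeping of the angular-momentum decomposition; I would phrase the whole thing via min-max so that no explicit control of the perturbed eigenfunctions is needed, only the one-sided bound $q_0[\psi]\le q_0^{(0)}[\psi]+V_0\|\psi\|^2$ above. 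One should also remark that this construction says nothing about the critical case $V_0=\alpha^2$, which is why the statement is restricted to $V_0<v_N<\alpha^2$; the critical regime is treated separately via the rooftop example.
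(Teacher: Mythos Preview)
Your approach is correct but takes a different route from the paper. The paper's proof is a two-line continuity argument: since the bias is a bounded perturbation, $\{H_{\alpha,\mathcal{C}_\theta,V}:V_0\ge 0\}$ is a type~(A) holomorphic family in the sense of Kato, so isolated eigenvalues depend continuously on $V_0$; together with the continuity of the essential-spectrum threshold from Theorem~\ref{spess} and the infinitude of discrete spectrum at $V_0=0$ from \cite{BEL14}, the result follows immediately. Your variational argument is more hands-on and more elementary, avoiding the analytic-perturbation machinery: you build an $N$-dimensional trial subspace out of approximate $V_0=0$ eigenfunctions and control the form directly via $q_0[\psi]\le q_0^{(0)}[\psi]+V_0\|\psi\|^2$. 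This buys you, in principle, an explicit (if crude) estimate of $v_N$ in terms of the spectral gap $\eta$ at $V_0=0$, which the abstract continuity argument does not provide. One point to clean up: you use the symbol $\mu$ both for the unperturbed threshold $\mu(0)=-\tfrac14\alpha^2$ (when bounding $q_0^{(0)}$) and for the perturbed threshold $\mu(V_0)$ (in the final min-max step). The argument survives because $\mu(V_0)\ge\mu(0)$ for $V_0\ge 0$, so $q_0[\psi]<\mu(0)-\eta$ already implies $q_0[\psi]<\mu(V_0)$, but you should state this monotonicity explicitly. The density step you flag as the main obstacle is, as you suspect, entirely routine.
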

 % -------------- %
 \begin{proof}
The potential bias we consider is a bounded perturbation, therefore
$\{H_{\alpha,\mathcal{C}_\theta,V}:\: V_0\ge 0\}$ is a type (A)
holomorphic family in the sense of \cite{Ka}. This means, in
particular, that the eigenvalues, if they exist, are continuous
functions of $V_0$. The same is by Theorem~\ref{spess} true for the
essential spectrum threshold. Since the bias-free operator
$H_{\alpha,\mathcal{C}_\theta,0}$ has by \cite{BEL14} an infinite
number of isolated eigenvalues accumulating at $-\frac14\alpha^2$,
the continuity implies the result.
 \end{proof}

Before proceeding further, let us note than in the
higher-dimensional analogue of this problem the geometrically
induced discrete spectrum is void in the absence of the bias, cf.
\cite{LO16}, hence one does not expect a counterpart of
Proposition~\ref{weak bias} to hold either.

\section{Example of a rooftop surface} \label{s:rooftop}

In the previous example we left open the question whether the
discrete spectrum could survive up to the critical value of the
potential. To demonstrate that this is possible, consider now
another example in which the surface $\Sigma =
\mathcal{R}_{L,\theta}$ is defined through its cuts $\Gamma_z$ at
the fixed value of the coordinate $z$, i.e. $\Sigma = \{\Gamma_z:\:
z\ge 0\}$. We suppose that each $\Gamma_z$ is a $C^\infty$ loop in
the $(x,y)$-plane, being a border of a convex region, and consisting
of
 % -------------- %
\begin{enumerate}[(i)]
\item two line segments $\{(x,\pm z\tan\theta):\: |x|\le
\frac12 L\}$, and
\item two arcs connecting the loose `right' and `left' ends of the
segment, respectively. We suppose in addition that that these arcs
corresponding to different values of $z$ are mutually homothetic.
\end{enumerate}
 % -------------- %
We can regard $\mathcal{R}_{L,\theta}$ as coming from cutting the
cone of the previous example into two halves and inserting in
between a wedge-shaped strip of height $L$, modulo a smoothing in
the vicinity of the interface lines. It is easy to check that such a
surface satisfies assumptions (a)--(d) of Section~\ref{s:results}
with the set $\mathcal{C}$ consisting of the segment $\{(x,0,0):\:
|x|\le \frac12 L\}$ and $\mathcal{P}=\emptyset$. We have the
following result concerning the critical operator $H= H_{\alpha,
\mathcal{R}_{L,\theta}, \alpha^2}$:
 % -------------- %
\begin{proposition} \label{rooftop}
$\sigma_{disc}(H)$ is nonempty provided $L$ is sufficiently large,
and moreover, the number of negative eigenvalues can be made larger
than any fixed integer by choosing $\theta$ small enough.
\end{proposition}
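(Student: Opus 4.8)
The plan is to build an explicit trial function that, inserted into the quadratic form $q$ of the critical operator, gives a negative value, thereby forcing $\sigma_\mathrm{disc}(H)\ne\emptyset$ in view of Theorem~\ref{spess} which fixes $\inf\sigma_\mathrm{ess}(H)=0$ for $V_0=\alpha^2$. The idea is to exploit the fact that the rooftop is, away from its two interface lines, flat: along the central wedge-shaped strip $\{(x,y):\:|x|\le\frac12 L\}$ of height $L$ the surface is (modulo the smoothing) a pair of half-planes meeting at an edge, and a flat $\delta$-interaction in the appropriate transverse variable has, at criticality, the zero-energy resonance $\psi_0\notin L^2$ of Lemma~\ref{prop-h}(iv). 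One then glues this transverse resonance profile to a function that is constant in $x$ over $|x|<\frac12 L$, cut off smoothly outside, and matched to the conical asymptotics for $|x|\ge\frac12 L$. Concretely, I would take $\psi(x,\cdot)=\chi(x)\,\Phi(\cdot)$ where $\chi$ is a plateau function equal to $1$ on $|x|\le\frac12 L-1$ and supported in $|x|\le\frac12 L$, and $\Phi$ is (a cutoff of) the transverse resonance state translated along the cross-section.

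The key steps, in order: (1) Introduce on the flat central part the coordinates $(x,t)$ with $t$ the signed distance from $\Sigma$ in a cross-section $\Gamma_z$; because the strip is genuinely flat in $x$, the Jacobian is trivial there and $q$ restricted to this region splits as $\|\partial_x\psi\|^2$ plus the transverse form $\phi\mapsto\|\phi'\|^2-\alpha|\phi(0)|^2+(\phi,V_0\phi)$ acting in $t$, whose infimum is exactly $0$ attained in the generalized sense by $\psi_0$. (2) Choose $\Phi$ to be $\psi_0$ truncated at distance $\sim\sqrt{L}$ from the surface so that $\|\Phi\|_{L^2}\sim L^{1/4}$ (the resonance grows polynomially, so truncation at scale $\sqrt L$ makes the $L^2$ norm grow like a positive power of $L$) while the truncation error in the transverse form is $O(L^{-1/2})$ or smaller. (3) Estimate the $x$-derivative cost: $\|\partial_x\psi\|^2$ picks up contributions only from the two collars $|x|\in[\frac12 L-1,\frac12 L]$, each contributing $O(\|\Phi\|_{L^2}^2)=O(\sqrt L)$, whereas the favourable transverse term over the whole plateau $|x|<\frac12 L-1$ is $0$ and the small truncation defect, integrated over a plateau of length $\sim L$, contributes $O(L\cdot L^{-1})=O(1)$ at worst — this comparison of $L$ versus $\sqrt L$ is what makes the form negative for $L$ large, once one also controls the pieces living over the two conical caps $|x|\ge\frac12 L$. (4) For the caps one falls back on the analysis of the previous section: the conical part supports, for the bias-free operator, infinitely many bound states below $-\frac14\alpha^2$ by \cite{BEL14}, and the matching function there can be chosen with form value bounded independently of $L$; alternatively, since the caps occupy a compact region in the transverse-to-$x$ sense one simply carries the cutoff of $\Phi$ through and the caps contribute $O(1)$. (5) Assemble the pieces: $q[\psi]\le C_1+C_2\sqrt L - 0\cdot L<0$ for $L$ large, where I have been cavalier with signs — more precisely the negative contribution comes not from the plateau itself (which gives $0$) but from choosing $\Phi$ not quite at resonance but slightly below, i.e. replacing $V_0=\alpha^2$ momentarily by the subcritical profile to gain a genuinely negative transverse eigenvalue $-\varepsilon_L$ with $\varepsilon_L\to 0$; then the plateau yields $-\varepsilon_L\|\Phi\|^2\cdot(\text{length})\sim -\varepsilon_L L^{3/2}$ which beats the derivative cost $O(\sqrt L)$ provided $\varepsilon_L$ decays slower than $L^{-1}$, a rate one can arrange.

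For the second assertion — that the count of negative eigenvalues exceeds any prescribed $N$ when $\theta$ is small — the plan is a Neumann-bracketing / dimension-counting argument: as $\theta\to0$ the conical caps become increasingly sharp cones, and by Proposition~\ref{weak bias} combined with \cite{BEL14} the sharper the cone the more bound states it carries (the accumulation at $-\frac14\alpha^2$ is faster); more directly, one constructs $N$ mutually orthogonal trial functions supported in disjoint annular bands of the cross-section near the interface line, each band being long (length $\sim L$) in $x$ and narrow in $t$, replicating the above negativity estimate on each, and invokes the variational (min–max) principle: the existence of an $N$-dimensional subspace on which $q<0$ forces $\#\sigma_\mathrm{disc}(H)\ge N$. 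The mutual orthogonality is arranged because the resonance profile, truncated at scale $\sqrt L$, can be further sliced into $N$ pieces supported in disjoint $t$-intervals, or — cleaner — because a small $\theta$ makes the interior region long and thin in the $z$-direction, allowing $N$ disjoint translates along $z$. I would implement whichever is technically lightest.

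\textbf{Main obstacle.} The delicate point is the smoothing of $\Sigma$ near the two interface lines $\{(x,0,0):\:|x|\le\frac12 L\}$: there the surface is only $C^\infty$ by fiat but its curvature is not small and not controlled uniformly in $L$, so the clean splitting of the form into an $x$-part plus a transverse part fails in a neighbourhood of those lines, and one must check that this neighbourhood has $x$-extent $O(1)$ (hence contributes only $O(\|\Phi\|^2)=O(\sqrt L)$, absorbable) while the curvature correction to the transverse form there, though $O(1)$ in size, acts over a set of finite $x$-measure and so does not spoil the $L^{3/2}$ gain. Equivalently one must verify that the smoothing can be performed in an $L$-independent tubular neighbourhood of the edge — which is plausible since the two flat sheets meet at a fixed dihedral angle $\pi-2\theta$ independent of $L$ — and that assumption (d)'s lower bound constant $c$ stays bounded away from $0$ uniformly in $L$. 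Handling this `corner region' rigorously, and pinning down the admissible decay rate of $\varepsilon_L$ against the truncation scale, is where the real work lies; everything else is the flat-strip computation sketched above.
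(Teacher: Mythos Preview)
Your approach has a genuine gap: you are reducing to the wrong transverse problem. In the central wedge region the surface is translation-invariant in $x$, so the natural cross-section is the $(y,z)$-plane, and the transverse operator there is the \emph{two-dimensional} leaky-wire operator with a $\delta$-interaction on a broken line and critical bias $V_0=\alpha^2$ in the convex sector. That operator, by the result of \cite{EV16} which the paper invokes, has a genuine negative eigenvalue $\lambda<0$ with an $L^2$ eigenfunction $\phi(y,z)$. You instead reduce to the \emph{one-dimensional} operator $h$ of Lemma~\ref{prop-h}, whose spectrum at criticality is exactly $[0,\infty)$; its quadratic form is non-negative on all of $H^1(\mathbb{R})$, so no choice of $\Phi\in L^2$ can make the transverse contribution negative. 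Your step~(5) acknowledges this and proposes to ``replace $V_0=\alpha^2$ momentarily by the subcritical profile'' to manufacture a negative transverse eigenvalue $-\varepsilon_L$ --- but $V_0$ is fixed by hypothesis, and for the actual operator the one-dimensional transverse form is $\ge 0$, full stop. The elaborate balance of $L^{3/2}$ gain against $\sqrt L$ cost therefore never gets off the ground.

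The paper's proof is an order of magnitude simpler than what you attempt. Take the two-dimensional eigenfunction $\phi(y,z)$ with eigenvalue $\lambda<0$ from \cite{EV16}, pick $g\in C_0^\infty(-1,1)$, and set $\psi_\varepsilon(x,y,z)=\varepsilon^{1/2}g(\varepsilon x)\phi(y,z)$. On the flat wedge the variables separate exactly and $q[\psi_\varepsilon]=(\varepsilon^2\|g'\|^2+\lambda)\|\psi_\varepsilon\|^2<0$ for $\varepsilon$ small; then one simply takes $L$ large enough that $\mathrm{supp}\,g(\varepsilon\cdot)\subset(-\tfrac12 L,\tfrac12 L)$. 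No truncated resonance, no corner analysis, no curvature corrections --- the trial function never sees the conical caps or the smoothed edge at all. The second assertion follows immediately because \cite{EV16} gives arbitrarily many two-dimensional eigenvalues $\lambda_1<\dots<\lambda_N<0$ for $\theta$ small, and the same construction applied to each $\phi_j$ yields an $N$-dimensional subspace on which $q<0$.
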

 % -------------- %
 \begin{proof}
We employ the result from the two-dimensional case \cite{EV16},
where an attractive $\delta$ interaction supported by a broken line
of the opening angle $2\theta\in(0,\pi)$ gives rise at least one
bound states, and to a larger number for $\theta$ small. Let
$\phi=\phi(y,z)$ be an eigenfunction of the two-dimensional problem
corresponding to an eigenvalue $\lambda<0$. We choose a function
$g\in C_0^\infty(-1,1)$ and use
 % -------------- %
$$
\psi_\varepsilon:\: \psi_\varepsilon(x,y,z) = \varepsilon^{1/2}
g(\varepsilon x) \phi(y,z)
$$
 % -------------- %
as a trial function. Since the variables are separated, it is
straightforward to find the value of the quadratic form
\eqref{Hamform}, namely
 % -------------- %
$$
q[\psi_\varepsilon] = \big( \varepsilon\|g'\|^2 +\lambda \big)\,
\|\psi\|^2.
$$
 % -------------- %
The expression in the bracket can be made negative by choosing
$\varepsilon$ small enough, and since the support of
$\psi_\varepsilon$ lies within the layer $\{ (x,y,z):\: |x|\le
\frac12 L\}$, it is sufficient to choose $L>2\varepsilon$. Moreover,
the argument applies to any eigenvalue $\lambda$ of the
two-dimensional problem, which concludes the argument.
 \end{proof}

Note that the result will not change if the surface $\Sigma$ is
deformed outside the support of the trial function which means, in
particular, that convexity assumption may be weakened.

\subsection*{Acknowledgments}
The research was supported by the Czech Science Foundation within
the project 17-01706S.

\bibliographystyle{amsalpha}

\end{document}